\theoremstyle{definition}
\newtheorem{theorem}{Theorem}
\newtheorem{corollary}{Corollary}
\newtheorem{proposition}{Proposition}
\newtheorem{remark}{Remark}
\begin{document}

\title{Small Support Approximate Equilibria in Large Games}
\author{Yakov Babichenko\footnote{Center for the Mathematics of Information, Department of Computing and Mathematical Sciences, California Institute of Technology. e-mail:babich@caltech.edu.}\footnote{The author wishes to thank Constantinos Daskalakis and Siddarth Barman for useful discussions and comments. The author  gratefully acknowledges support from a Walter S. Baer and Jeri Weiss fellowship.}}

\maketitle

\begin{abstract}
In this note we provide a new proof for the results of Lipton et al. \cite{lipton} on the existence of an approximate Nash equilibrium with logarithmic support size. Besides its simplicity, the new proof leads to the following contributions:

1. For $n$-player games, we improve the bound on the size of the support of an approximate Nash equilibrium.

2. We generalize the result of Daskalakis and Papadimitriou \cite{daskalakis} on small probability games from the two-player case to the general $n$-player case.

3. We provide a logarithmic bound on the size of the support of an approximate Nash equilibrium in the case of graphical games.

\end{abstract}

\section{Introduction}

The problem of the existence of a small-support approximate equilibrium (i.e., every player randomizes among \emph{small} set of his actions) has been studied in the literature for the past two decades. Althofer \cite{althofer} considered two-player zero-sum games and showed existence of approximately optimal strategies with support of size $O(\log m)$, where $m$ is the number of actions. Lipton, Markakis, and Mehta \cite{lipton} later generalized this result to all two-player games; i.e., they showed existence of an approximate equilibrium with support of size $O(\log m)$. This result yields an exhaustive search algorithm for computing an approximate Nash equilibrium with a quasi-polynomial running time ($m^{\log m}$). This is the best-known bound today for computing an approximate Nash equilibrium. Daskalakis and Papadimitriou \cite{daskalakis} generalized the technique of Lipton et al. \cite{lipton} to prove that in two-player games an approximate Nash equilibrium can be computed in polynomial time in games that possess a small-probabilities Nash equilibrium (see definition in Section \ref{sec:small}).

The related problem of the existence of a pure Nash equilibrium (an equilibrium with the minimal support) in subclasses of games has been studied in the literature for much longer; see, e.g., Rosenthal \cite{rosenthal} and Shmeidler \cite{shmid}. A recent paper by Azrieli and Shmaya \cite{shmaya} analyzes the relation between the influence that a player has on the payoffs of other players and the existence of an approximate Nash equilibrium. They show that if the influence is small enough, then such a game has an approximate pure Nash equilibrium.

In this note we provide a new proof for the results of Lipton et al. \cite{lipton} and Daskalakis and Papadimitriou \cite{daskalakis} using similar techniques to those developed by Azrieli and Shmaya \cite{shmaya}. Besides its simplicity, the new proof leads to the following contributions:

1. For $n$-player games we improve the bound on the size of the support of an approximate Nash equilibrium from $O(n^2 \log m)$ (see Lipton et. al. \cite{lipton}) to $O(n \log m)$ (see Corollary \ref{cor:gen}).

2. We generalize the result of Daskalakis and Papadimitriou \cite{daskalakis} from two-player games case to all $n$-player game cases (see Corollary \ref{cor:2}). 

3. We provide a logarithmic bound ($O(\log n + \log m)$) on the size of the support of approximate Nash equilibrium in the case of graphical games. This bound is novel (see Theorem \ref{theo:1}).

The note is organized as follows. In Section \ref{sec:pre} we present the notations and preliminaries that will be useful in our new proof. In Section \ref{sec:gen} we state and prove the a result on graphical games; this result generalizes Lipton et al. \cite{lipton}. In Section \ref{sec:small} we state and prove the result that generalizes the result of Daskalakis and Papadimitriou. Section \ref{sec:disc} is a discussion.

\section{Preliminaries}\label{sec:pre}

We consider $n$-player games where every player $i$ has a large number of actions. For simplicity, we will consider the case where all players have the same number of actions\footnote{Given a game where player $i$ has $m_i$ actions, we can consider an equivalent game where every player has $m=\max_i m_i$ actions. This can be done by adding $m-m^i$ strictly dominated actions to every player $i$.} $m$. We will use the following standard notations. We denote by $A_i=\{1,2,...,m\}$ the \emph{actions set of player} $i$, and by $A=\times_i A_i$ the \emph{actions profile set}. The simplex $\Delta(A_i)$ is the set of \emph{mixed strategies of player} $i$. We will assume that the payoffs of all players are in $[0,1]$, and $u_i:A\rightarrow [0,1]$ will denote the \emph{payoff function of player} $i$. The payoff function $u_i$ can be multylinearly extended to $u_i:\Delta(A)\rightarrow [0,1]$. The \emph{payoff functions profile} is $u=(u_i)_{i=1}^n$, which is also called the \emph{game}. A mixed action profile $x=(x_1,x_2,...,x_n)$ is an \emph{Nash} $\varepsilon$\emph{-equilibrium} if for every action $a_i \in A_i$, it holds that $u_i(x)\geq u_i(a_i,x_{-i})-\varepsilon$. 

A mixed strategy $x_i=(x_i(1),x_i(2),...,x_i(m))$ of player $i$ will be called $k$\emph{-uniform} if $x_i(j)=c_j/k$, where $c_j\in \mathbb{N}$ for every $j=1,2,...,m$. Note that the support of $k$ uniform strategy is of size at most $k$. A mixed strategy profile $x=(x_i)_{i=1}^n$ will be called $k$\emph{-uniform} if every $x_i$ is $k$-uniform.

We say that the payoff of player $i$ \emph{depends on player} $j$ if there exists an action profile $a_{-j}$ and a pair of actions $a_j,a'_j$ of player $j$ such that $u_i(a_j,a_{-j}) \neq u_i(a'_j,a_{-j})$. A game where the payoff of every player depends on at most $d$ other players will be called a \emph{graphical game of degree} $d$. Graphical games, introduced by Kearns et al. \cite{kea}, express the situation where players are located on vertices of an underlying graph and their payoffs are influenced only by their neighbors' actions. Note that every $n$-player game is a graphical game of degree $n-1$.

\subsection{Lipschitz games}

Player $i$ has a $\lambda$\emph{-Lipschitz payoff function} if $|u_i(a_j,a_{-j})-u_i(a'_j,a_{-j})|\leq \lambda$ for every $i\neq j$ and every $a_j,a'_j\in A_j$.  The Lipschitz property means that a change of strategy of a single player $j \neq i$ has little effect on the payoff of player $i$. Note that player $i$ can have a big effect on his own payoff. A game will be called $\lambda$\emph{-Lipschitz} if the payoff functions of all players are $\lambda$-Lipschitz.

The following proposition is an important property of $\lambda$-Lipschitz games.

\begin{proposition}\label{prop:lip}
If in an $n$-player game the payoff of player $i$ depends on at most $d$ players, and his payoff function is $\lambda$-Lipschitz, then for every pure action $a_i\in A_i$ and for every mixed action profile of the opponents $x_{-i}$, it holds that\footnote{By the notation $x_{-i}(B)$, we refer to $x_{-i}$ as a probability measure on $A_{-i}$, and so $x_{-i}(B)$ is the probability of the event $B$.} 
\begin{equation*}
x_{-i}(B)\geq 1-2\exp \left( -\frac{\delta^2}{d\lambda^2} \right)
\end{equation*}
where $B \subset A_{-i}$ is defined by
\begin{equation*}
B=\{ a_{-i}: |u_i(a_i,a_{-i})-u_i(a_i,x_{-i})|\leq \delta \}.
\end{equation*}
\end{proposition}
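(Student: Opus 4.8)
The plan is to recognize this as a concentration inequality and deploy a martingale/bounded-differences argument, since the Lipschitz condition is precisely a bounded-differences hypothesis. Fix the pure action $a_i$ and regard the quantity of interest as a function $f(a_{-i})=u_i(a_i,a_{-i})$ of the opponents' action profile. Under the mixed profile $x_{-i}$, the coordinates $(a_j)_{j\neq i}$ are drawn independently (a product measure on $A_{-i}$, since a mixed strategy profile samples each player's action independently), so $f$ is a function of independent random variables. The event $B$ says that $f$ deviates from its mean by at most $\delta$, and the goal is to show this happens with high probability; the complementary event has probability at most $2\exp(-\delta^2/(d\lambda^2))$, which is the classic form of a two-sided concentration bound.

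First I would verify the bounded-differences property: if I change a single opponent $a_j$ to $a_j'$ (holding all other coordinates fixed), the Lipschitz hypothesis gives $|f(a_j,a_{-i,-j})-f(a_j',a_{-i,-j})|\leq\lambda$. Moreover, because the payoff of player $i$ depends on at most $d$ players, $f$ genuinely varies in at most $d$ of its coordinates — changing a coordinate corresponding to a player on whom $u_i$ does not depend leaves $f$ unchanged, contributing a zero difference. Hence the sum of squared bounded-difference constants is at most $d\lambda^2$ rather than $(n-1)\lambda^2$, and this is exactly where the improved dependence on $d$ (the key refinement over the naive $n$-player bound) enters.

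Next I would invoke the McDiarmid/Azuma bounded-differences inequality. Writing $\mathbb{E}[f]=u_i(a_i,x_{-i})$ (the multilinear extension evaluated at $x_{-i}$ is precisely the expectation of $f$ under the product measure), McDiarmid's inequality yields, for each tail separately,
\begin{equation*}
x_{-i}\bigl(f-\mathbb{E}[f]\geq\delta\bigr)\leq\exp\left(-\frac{2\delta^2}{\sum_j c_j^2}\right),
\end{equation*}
where the sum of squared differences $\sum_j c_j^2\leq d\lambda^2$. Applying this to both $f-\mathbb{E}[f]$ and $\mathbb{E}[f]-f$ and adding gives a two-sided bound of $2\exp(-2\delta^2/(d\lambda^2))$. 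I would set up the martingale by revealing the coordinates one at a time, constructing the Doob martingale $M_k=\mathbb{E}[f\mid a_{j_1},\dots,a_{j_k}]$ over the (at most $d$) relevant players, and apply Azuma's inequality to it.

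The main obstacle, or really the bookkeeping care-point, is reconciling the constant in the exponent: a direct application of McDiarmid gives the factor $2$ in the numerator, so the stated bound $2\exp(-\delta^2/(d\lambda^2))$ is actually weaker than what the sharp inequality delivers and hence follows a fortiori. I would therefore not worry about optimizing constants but would make sure the exponent as stated is genuinely implied. The one substantive subtlety is correctly arguing that only $d$ coordinates contribute to $\sum_j c_j^2$ — this requires combining the dependence hypothesis (at most $d$ relevant players) with the Lipschitz bound (each relevant coordinate contributes at most $\lambda$), and it is the conjunction of these two structural assumptions that produces the $d\lambda^2$ denominator driving the whole small-support theory.
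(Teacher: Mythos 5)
Your proposal is correct, and it matches the paper's approach: the paper does not spell out a proof at all, but defers to the concentration-of-measure phenomenon for Lipschitz functions (Ledoux \cite{ledo}) as derived explicitly in Azrieli and Shmaya \cite{shmaya}, and that derivation is exactly the McDiarmid bounded-differences argument you give --- product measure from independent mixed strategies, at most $d$ coordinates each with difference constant $\lambda$, and $\mathbb{E}[f]=u_i(a_i,x_{-i})$ by multilinearity. You are also right that McDiarmid yields the sharper exponent $2\delta^2/(d\lambda^2)$, so the stated bound follows a fortiori.
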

In simple words, Proposition \ref{prop:lip} claims that if we randomize an action profile $a_{-i}$ according to $x_{-i}$, then probably player $i$ will have approximately the same outcome if he plays against $a_{-i}$ or against $x_{-i}$.

Proposition \ref{prop:lip} is based on the concentration of measure phenomena for Lipschitz functions (see Ledoux \cite{ledo}) and it is derived explicitly in Azrieli and Shmaya \cite{shmaya}.

\subsection{From general games to Lipschitz games}\label{sec:G->L}

We present a very natural procedure that constructs for every game a corresponding game with the Lipschitz property.

Fix $k\in \mathbb{N}$. Given a game $u$ we construct a new game $v=v(u,k)$ with $kn$ players as follows. We ``split'' every player $i$ into a population of $k$ players $i(1),i(2),...,i(k)$. Each player $i(j)$ plays the original game $u$ against the aggregate behavior of the $n-1$ other populations of size $k$.

Formally, it will be convenient to present $A_i$ as the set of vectors \newline $\{e_1,e_2,...,e_m \}\subset \mathbb{R}^m$, where $e_j$ is the $j-th$ unit vector in $\mathbb{R}^m$. In such a representation the unit simplex $\Delta^m :=\{(x_j)_{j=1}^m: \sum_j =1, x_j\geq 0\}$ is the set of mixed strategies $\Delta(A_i)$. All players $i(j)$ have the same actions set $A_i$.  The payoff of player $i_0(j_0)$ is defined by
\begin{equation*}
v_{i_0(j_0)}((a_{i(j)})_{1\leq i \leq n, 1 \leq j \leq k})=u_i \left( a_{i_0(j_0)},\left( \frac{\sum_{j=1}^k a_{i(j)}}{k}  \right)_{i \neq i_0} \right).
\end{equation*}
Note that $\sum_{j=1}^k a_{i(j)}/k \in \Delta^m$; therefore, this vector represents the mixed strategy of population $i$.

\begin{remark}\label{rem:lip}
The game $v$ has the following two properties: 

(P1) $v$ is $1/k$ Lipschitz, because a deviation of a single player $i(j)$ changes the mixed strategy that is played by population $i$ only by $1/k$.

(P2) Every pure Nash $\varepsilon$-equilibrium of the game $v$ corresponds to a $k$-uniform mixed Nash $\varepsilon$-equilibrium of the game $u$. The corresponding mixed equilibrium will be the one where player $i$ plays the aggregated strategy of population $i$ in the game $v$.\footnote{Moreover, the opposite direction is also true. Every $k$-uniform $\varepsilon$-equilibrium of $u$ corresponds to a pure Nash $\varepsilon$-equilibrium of $v$. The corresponding pure equilibrium will be the one where population $i$ plays a pure profile with aggregated behavior $x_i$, where $x_i$ is the $k$-uniform strategy of player $i$ in the $\varepsilon$-equilibrium in the game $u$.} 
\end{remark}

\section{General Games and Graphical Games}\label{sec:gen}

\begin{theorem}\label{theo:1}
Every $n$-player graphical game of degree $d$ with $m$ actions for every player has a $k$-uniform Nash $\varepsilon$-equilibrium for $k=\frac{8}{\varepsilon^2}d(\log n+\log m)$.
\end{theorem}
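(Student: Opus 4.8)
The plan is to reduce the statement to the existence of a \emph{pure} $\varepsilon$-equilibrium in the associated Lipschitz game and then to exhibit such a pure equilibrium by the probabilistic method. Fix $k=\frac{8}{\varepsilon^2}d(\log n+\log m)$ and form the game $v=v(u,k)$ of Section \ref{sec:G->L}. By property (P2) of Remark \ref{rem:lip}, it suffices to produce a pure Nash $\varepsilon$-equilibrium of $v$, since this translates into a $k$-uniform mixed Nash $\varepsilon$-equilibrium of $u$. Two features of $v$ drive the argument: by (P1) it is $1/k$-Lipschitz, and since player $i$ in the original game depends on at most $d$ other players, each player $i(j)$ in $v$ depends only on the players of the $d$ neighboring populations, i.e. on at most $dk$ players. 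Consequently Proposition \ref{prop:lip}, applied with degree $dk$ and Lipschitz constant $1/k$, yields a concentration bound in which $d\lambda^2=d/k$, so each failure probability becomes $2\exp(-\delta^2 k/d)$.

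Next I would invoke Nash's theorem to fix an exact mixed equilibrium $x^*$ of $v$ and draw a pure profile $a$ from the product distribution $x^*$. For each player $p=i(j)$ and each action $b\in A_p$ let $G_{p,b}$ be the event that $|v_p(b,a_{-p})-v_p(b,x^*_{-p})|\le\delta$; by Proposition \ref{prop:lip} its complement has probability at most $2\exp(-\delta^2 k/d)$. Suppose every $G_{p,b}$ occurs. Because $x^*$ is an exact equilibrium, the realized action $a_p$ lies in the support of $x^*_p$ and is therefore a best response against $x^*_{-p}$, so $v_p(a_p,x^*_{-p})\ge v_p(a'_p,x^*_{-p})$ for every deviation $a'_p$. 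Chaining the inequalities
\begin{equation*}
v_p(a_p,a_{-p})\ge v_p(a_p,x^*_{-p})-\delta\ge v_p(a'_p,x^*_{-p})-\delta\ge v_p(a'_p,a_{-p})-2\delta
\end{equation*}
shows that on this event $a$ is a pure $2\delta$-equilibrium of $v$; setting $\delta=\varepsilon/2$ gives a pure $\varepsilon$-equilibrium.

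It remains to show the good event has positive probability, and this is where I expect the main obstacle: a naive union bound over all $knm$ pairs $(p,b)$ costs a factor $knm$ and does not match the claimed constant. The point to exploit is that the events are shared within a population. Indeed $v_{i(j)}(b,\cdot)$ equals $u_i(b,\cdot)$ evaluated at the aggregate strategies of the $d$ neighboring populations and does not depend on population $i$ at all; hence $G_{i(j),b}$ is one and the same event for all $j=1,\dots,k$, and the number of distinct events is only $nm$. A union bound then gives
\begin{equation*}
\Pr[\exists\,(p,b):\ G_{p,b}\text{ fails}]\le 2nm\exp(-\delta^2 k/d).
\end{equation*}
With $\delta=\varepsilon/2$ and $k=\frac{8}{\varepsilon^2}d(\log n+\log m)$ one computes $\delta^2 k/d=2(\log n+\log m)=2\log(nm)$, so the right-hand side equals $2nm/(nm)^2=2/(nm)<1$ (for $nm\ge 3$; the remaining cases are trivial). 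Thus the good event has positive probability, a pure $\varepsilon$-equilibrium of $v$ exists, and by (P2) the desired $k$-uniform Nash $\varepsilon$-equilibrium of $u$ follows.
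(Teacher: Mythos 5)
Your proposal is correct and follows the same route as the paper: pass to the population game $v=v(u,k)$, fix an exact mixed equilibrium, use Proposition \ref{prop:lip} plus a union bound to find a pure profile in the support on which all concentration events hold, and translate back via Remark \ref{rem:lip} (P2). However, your treatment of the union bound is sharper than the paper's, and the sharpening is not cosmetic. The paper takes the union over all $nkm$ pairs of (player of $v$, action) and asserts that inequality (\ref{ineq1}) with $k=\frac{8}{\varepsilon^2}d(\log n+\log m)$ yields $x(E_{i,b}^c)\le 1/(2nkm)$; in fact that inequality only yields $x(E_{i,b}^c)\le 2(nm)^{-2}$, and $2(nm)^{-2}\le 1/(2nkm)$ requires $4k\le nm$, which fails whenever $\varepsilon$ is small or $d$ is large (e.g.\ $n=2$, $m=10$, $d=1$, $\varepsilon=10^{-2}$ gives $k\approx 2.4\cdot 10^{5}\gg nm=20$). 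Your observation that $v_{i(j)}(b,\cdot)$ does not depend on population $i$ at all, so that the events $G_{i(j),b}$ coincide for $j=1,\dots,k$ and only $nm$ distinct events need to be controlled, closes exactly this gap: the union bound becomes $2nm\cdot(nm)^{-2}=2/(nm)<1$, and the theorem holds with the stated constant (your disposal of the cases $nm\le 2$ as trivial is fine, since $n=1$ or $m=1$ admits an exact pure equilibrium). The alternative repair would be to inflate $k$ by lower-order terms in $\log d$ and $\log(1/\varepsilon)$, but your argument preserves the theorem exactly as stated; in that sense your write-up is not merely a valid rederivation but a correction of the published proof.
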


Usually graphical game models consider games with a large number of players $n$ of constant degree $d$. Theorem \ref{theo:1} proves the existence of a relatively simple approximate  Nash equilibrium where every player uses a strategy with a support that is logarithmically small on $n$ and $m$.

Lipton et al. \cite{lipton} show that in every $n$-player game with $m$ actions for every player there exists a $k$-uniform Nash $\varepsilon$-equilibrium for $k=O(n^2 \log m)$. Theorem \ref{theo:1} applied to general games shows that in such games there exists a $k$-uniform Nash $\varepsilon$-equilibrium for $k=O(n \log m)$.

\begin{corollary}\label{cor:gen}
Every $n$-player game of with $m$ actions for every player has a $k$-uniform Nash $\varepsilon$-equilibrium for $k=\frac{8}{\varepsilon^2}(n-1)(\log n+\log m)$.
\end{corollary}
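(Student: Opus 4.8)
The plan is to establish Theorem~\ref{theo:1} first and then obtain Corollary~\ref{cor:gen} as a direct specialization. Since every $n$-player game is a graphical game of degree $d=n-1$ (as noted in the preliminaries), substituting $d=n-1$ into the bound $k=\frac{8}{\varepsilon^2}d(\log n+\log m)$ of the theorem immediately yields $k=\frac{8}{\varepsilon^2}(n-1)(\log n+\log m)$, which is precisely the claim of the corollary. So the only real work is in the theorem, and the corollary is a one-line consequence.

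For the theorem itself, the strategy is to use the construction from Section~\ref{sec:G->L} together with the probabilistic method via Proposition~\ref{prop:lip}. First I would fix $k$ as in the statement and pass to the associated Lipschitz game $v=v(u,k)$ on $kn$ players, which by property (P1) is $1/k$-Lipschitz. By (P2) it suffices to exhibit a \emph{pure} Nash $\varepsilon$-equilibrium of $v$, since this corresponds to a $k$-uniform mixed $\varepsilon$-equilibrium of the original game $u$. To produce such a pure profile, I would start from an exact (mixed) Nash equilibrium $x$ of $u$ — which exists by Nash's theorem — and sample each population's actions independently from the corresponding mixed strategy $x_i$; concretely, each of the $k$ copies $i(1),\dots,i(k)$ draws its action i.i.d.\ from $x_i$. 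The aggregate empirical strategy of population $i$ is then a random $k$-uniform approximation to $x_i$, and the goal is to show that with positive probability the resulting pure profile is an $\varepsilon$-equilibrium.

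The heart of the argument is a union bound over the "bad" deviation events, controlled by Proposition~\ref{prop:lip}. For a fixed player $i(j)$ in population $i$, whose payoff depends on at most $d$ other populations, I would invoke the proposition with an appropriate choice of $\delta$ (roughly $\delta=\varepsilon/c$ for a small constant) and Lipschitz constant $\lambda=1/k$ to conclude that, for each candidate action $a_i\in A_i$, the realized payoff against the sampled profile differs from the payoff against the intended mixed profile by more than $\delta$ only with probability at most $2\exp(-\delta^2 k^2/(d\cdot(1/k)^2))$ — wait, more carefully, the proposition gives $2\exp(-\delta^2/(d\lambda^2))$, and with $\lambda=1/k$ this is $2\exp(-\delta^2 k^2/d)$; the chosen $k=\frac{8}{\varepsilon^2}d(\log n+\log m)$ is calibrated so this tail beats the number of (player, action) pairs, which is at most $kn\cdot m$. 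Taking a union bound over all $kn$ players and all $m$ deviating actions, the total failure probability stays below $1$, so a good pure profile exists. The last step is to verify that the two-sided $\delta$-closeness of payoffs (both at the played action and at each deviation) combines with the fact that $x$ was an exact equilibrium of $u$ to give the $\varepsilon$-equilibrium inequality with $\varepsilon=2\delta$ or so.

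The main obstacle I anticipate is bookkeeping the concentration bound correctly so that the exponent genuinely dominates the logarithm of the number of bad events. The delicate points are: (i) choosing $\delta$ in terms of $\varepsilon$ so that the final accumulated error is at most $\varepsilon$ rather than some larger multiple; (ii) confirming that the relevant Lipschitz constant in the $v$-game is $1/k$ and that player $i(j)$'s payoff depends on at most $d$ \emph{populations} (not $dk$ individual players) so the effective dimension in the exponent is $d$ and not $dk$; and (iii) ensuring the union bound is taken over exactly the events that certify the $\varepsilon$-equilibrium property, namely the payoff of each player at his realized action and at each of his $m$ alternatives. Getting these constants to line up is what forces the specific value $k=\frac{8}{\varepsilon^2}d(\log n+\log m)$, and the factor $8$ is presumably the slack needed to absorb the two-sided bound and the $\log 2$ terms.
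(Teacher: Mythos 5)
Your overall architecture matches the paper's exactly: Corollary~\ref{cor:gen} is obtained by specializing Theorem~\ref{theo:1} to $d=n-1$, and the theorem is proved via the population game $v=v(u,k)$, Proposition~\ref{prop:lip}, a union bound over the $nkm$ (player, action) pairs, and a three-step chain of inequalities with $\delta=\varepsilon/2$. Your one structural variant --- starting from an exact equilibrium $x$ of $u$ and sampling each population i.i.d.\ from $x_i$, rather than taking an arbitrary mixed equilibrium of $v$ itself as the paper does --- is sound: by multilinearity and independence the product profile in which every $i(j)$ plays $x_i$ is a Nash equilibrium of $v$ (this is precisely the device the paper uses in the proof of Theorem~\ref{theo:2}), and every action in the support of $x_i$ is an exact best response to $x_{-i}$, which is what your middle inequality needs.

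There is, however, a genuine error: you resolve your own delicate point (ii) in the wrong direction. Proposition~\ref{prop:lip} is parameterized by the number of \emph{players of the game to which it is applied} on whom the payoff depends, together with the per-player Lipschitz constant; the underlying concentration inequality is over those independent coordinates. In $v$, the payoff of player $i(j)$ depends on the $dk$ \emph{individual} players making up the $d$ neighboring populations, each with influence at most $\lambda=1/k$. So the correct per-event bound is $2\exp\left(-\frac{\delta^2}{(dk)\lambda^2}\right)=2\exp\left(-\frac{\delta^2 k}{d}\right)$, not $2\exp\left(-\frac{\delta^2 k^2}{d}\right)$ as you wrote (treating the dimension as $d$ populations while keeping $\lambda=1/k$ mixes two incompatible granularities). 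Your version is too strong to be true: it would allow $k=O\bigl(\sqrt{d(\log n+\log m)}/\varepsilon\bigr)$, which for $n=2$, $d=1$ contradicts Althofer's $\Omega(\log m)$ lower bound on support size cited in the paper's discussion. The error happens to be harmless for your proof as organized, because you committed to $k=\frac{8}{\varepsilon^2}d(\log n+\log m)$ anyway: with the correct exponent and $\delta=\varepsilon/2$ each bad event has probability at most $2(nm)^{-2}$, exactly the paper's bound (\ref{ineq1}), and the union bound over the $nkm$ events goes through as in the paper. So the repair is simply to apply Proposition~\ref{prop:lip} with $dk$ players and $\lambda=1/k$.
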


As a straightforward corollary of this result, we derive the following improvement to the oblivious algorithm for computing Nash approximate equilibrium in games with $n$ players. 

\begin{corollary}
Let $k=\frac{8}{\varepsilon^2}(n-1)(\log n+\log m)$. Then the oblivious algorithm\footnote{The term ``oblivious algorithm'' is from \cite{daskalakis}.} that exhaustively searches over the $k$-uniform strategies finds an $\varepsilon$-equilibrium in $O(m^{n^2 \log m })$ steps.\footnote{Lipton et al. \cite{lipton} prove a bound of $O(m^{n^3 \log m })$ on the number of steps.}
\end{corollary}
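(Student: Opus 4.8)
The plan is to read this off from the existence statement in Corollary~\ref{cor:gen} together with an elementary count of the search space; there is essentially no new mathematical content beyond bookkeeping, so I would keep the argument short.

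First I would settle correctness. By Corollary~\ref{cor:gen}, for $k=\frac{8}{\varepsilon^2}(n-1)(\log n+\log m)$ the game possesses a $k$-uniform Nash $\varepsilon$-equilibrium. The oblivious algorithm enumerates \emph{every} $k$-uniform strategy profile and, for each one, checks the defining inequality $u_i(x)\geq u_i(a_i,x_{-i})-\varepsilon$ for all $i$ and all $a_i\in A_i$. Since the enumerated set provably contains an $\varepsilon$-equilibrium, the algorithm necessarily reports one; correctness is thus immediate, and only the running time requires work.

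Next I would bound the size of the search space. A $k$-uniform strategy of a single player is a distribution of $k$ indistinguishable units among the $m$ actions, so there are exactly $\binom{m+k-1}{k}$ of them; crudely, assigning each of the $k$ units one of the $m$ action labels realizes every such strategy, giving the clean upper bound $m^{k}$ per player. Over all $n$ players the number of $k$-uniform profiles is therefore at most $(m^{k})^{n}=m^{nk}$. Substituting $k$ gives $nk=\frac{8}{\varepsilon^2}\,n(n-1)(\log n+\log m)=O(n^{2}\log m)$, where $\varepsilon$ is treated as a fixed constant and $\log n$ is absorbed into $\log m$. Finally, testing a single profile for the $\varepsilon$-equilibrium condition costs only polynomially much (on the order of $nm$ best-response comparisons), a factor dominated by the exponential count; hence the total number of steps is $m^{O(n^{2}\log m)}$, i.e.\ $O(m^{n^{2}\log m})$.

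The only point needing care---and the closest thing to an ``obstacle''---is interpreting the big-$O$ in the exponent: the bound is really $m^{O(n^{2}\log m)}$, and writing it as $O(m^{n^{2}\log m})$ silently absorbs the constant $8/\varepsilon^{2}$ and the term $\log n$ into $\log m$ (the latter being natural in the regime of many actions, $m\gtrsim n$). I would note that this is exactly the same convention under which the support bound $k=O(n^{2}\log m)$ of Lipton et al.~\cite{lipton} yields their quoted $O(m^{n^{3}\log m})$, so that the improvement from $n^{3}$ to $n^{2}$ in the exponent is a direct consequence of the improved support bound $k=O(n\log m)$ of Corollary~\ref{cor:gen}.
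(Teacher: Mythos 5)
Your proposal is correct and is exactly the argument the paper intends: the paper states this corollary without proof as an immediate consequence of Corollary~\ref{cor:gen}, namely that the search space of $k$-uniform profiles has size at most $m^{nk}=m^{O(n^2\log m)}$ and provably contains an $\varepsilon$-equilibrium. Your additional remarks on the multiset count $\binom{m+k-1}{k}$, the per-profile verification cost, and the big-$O$ convention in the exponent are sound bookkeeping that the paper leaves implicit.
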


\begin{proof}[Proof of Theorem \ref{theo:1}]

Let $k=\frac{8}{\varepsilon^2}d(\log n+\log m)$. We construct the game $v=v(u,k)$ as presented in Section \ref{sec:G->L}. We prove that the game $v$ possesses a pure Nash equilibrium, then, by Remark \ref{rem:lip} (P2) this concludes the proof. Moreover, we will prove that every $nk$-player $1/k$-Lipschitz graphical game of degree $dk$ has a pure Nash $\varepsilon$-equilibrium.

Consider a mixed action profile $x$ that is a (possibly mixed) Nash equilibrium of $v$. For every player $i$ and every action $b\in A_i$ of player $i$, we define the set of action profiles 
\begin{equation*}
E_{i,b}:=A_i\times \{a_{-i}: |v_i(b,a_{-i})-v_i(b,x_{-i})|\leq \varepsilon/2\}\subset A. 
\end{equation*}

Every action $a^*\in \cap_{i,b} E_{i,b} \cap support(x)$ is a pure Nash $\varepsilon$-equilibrium according to the following inequality:
\begin{equation*}
v_i(d,a^*_{-i})\leq v_i(d,x_{-i}) + \frac{\varepsilon}{2} \leq v_i(a^*_i,x_{-i}) + \frac{\varepsilon}{2} \leq v_i(a^*_i,a^*_{-i})+\varepsilon,
\end{equation*}
where the first inequality follows from $a^* \in E_{i,d}$, the second from $a^*_i\in support(x^i)$, and the third from $a^* \in E_{i,a^*_i}$. Therefore it is enough to prove that the above intersection is not empty.

By proposition \ref{prop:lip} we have
\begin{equation}\label{ineq1}
x(E_{i,b}^c)\leq 2 \exp(-\frac{\varepsilon^2 k}{4d}).
\end{equation}
Putting $k=\frac{8}{\varepsilon^2}d(\log n+\log m)$ we get $x(E_{i,b}^c)\leq 1/(2nkm)$. There are $nk$ players in $v$, and $m$ actions for every player. Therefore there are $nkm$ events $E_{i,b}$. Therefore, $x(\cap E_{i,b})\geq 1/2 >0$, which concludes the proof. 
\end{proof}

\section{Small Probability Games}\label{sec:small}

Following the terminology of \cite{daskalakis}, a profile of mixed actions $x$ will be called a $c$\emph{-small probabilities profile} if $x_i(j)\leq c/m$ for every player $i$ and every $j\in A^i$. A game $u$ will be called a $c$\emph{-small probability game} if there exists a Nash equilibrium $x$ that is a $c$-small probability profile.

Daskalakis and Papadimitriou \cite{daskalakis} prove that in small probability two-player games the oblivious random algorithm that samples $k$-uniform strategies for $k=\Theta(log m)$ finds an approximate Nash equilibrium in $O(c^2 m^{\log c})$ steps, i.e., in polynomial time in $m$. Here we generalize this result to general $n$-player games.

It will be convenient to think of the $k$-uniform strategies as a multiset that contains $k$ \emph{ordered} actions. In such a case the set of $k$-uniform strategy profiles is of size $m^{kn}$.

\begin{theorem}\label{theo:2}
Let $u$ be an $n$-player $c$-small probability games with $m$ actions for every player, and let $k=\frac{8}{\varepsilon^2}(n-1)(\log n+\log m)$. Then, among the $m^{kn}$ $k$-uniform strategy profiles in $u$, the number of strategy profiles that forms an Nash $\varepsilon$-equilibrium is at least
\begin{equation*}
\frac{m^{kn}}{2(nm)^{\frac{8}{\varepsilon^2}(n-1)n\ln c}}.
\end{equation*}
\end{theorem}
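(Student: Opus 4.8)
The plan is to imitate the proof of Theorem~\ref{theo:1}, but instead of merely showing that the relevant intersection of ``good'' events is nonempty, to \emph{count} how many $k$-uniform profiles lie in it. First I would pass to the Lipschitz game $v=v(u,k)$ with $nk$ players constructed in Section~\ref{sec:G->L}. Since $u$ is a $c$-small probability game, fix a Nash equilibrium $x$ of $u$ that is a $c$-small probabilities profile, so that $x_i(j)\le c/m$ for every player $i$ and action $j$. This $x$ is exactly the aggregate strategy of some mixed equilibrium of $v$, and the point is that drawing an ordered $k$-tuple of actions for each population $i$ independently according to $x_i$ is precisely sampling a pure profile of $v$ according to the product measure $x$ (viewed as a measure on the $m^{kn}$ ordered $k$-uniform profiles). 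Thus a uniform count of good profiles will be replaced by a lower bound on the $x$-measure of the good event, followed by a comparison between the measure $x$ and the counting (uniform) measure.

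Next I would reuse the event sets $E_{i,b}$ from the proof of Theorem~\ref{theo:1}: for each of the $nk$ players $i$ of $v$ and each of the $m$ actions $b$, Proposition~\ref{prop:lip} together with the choice $k=\frac{8}{\varepsilon^2}(n-1)(\log n+\log m)$ (here $d=n-1$) gives $x(E_{i,b}^c)\le \tfrac{1}{2nkm}$, hence a union bound over all $nkm$ events yields $x(\cap_{i,b}E_{i,b})\ge 1/2$. Every profile $a^*$ in $\cap_{i,b}E_{i,b}\cap \mathrm{support}(x)$ is a pure $\varepsilon$-equilibrium of $v$ by exactly the three-inequality argument of Theorem~\ref{theo:1}, and by Remark~\ref{rem:lip}~(P2) it corresponds to a $k$-uniform $\varepsilon$-equilibrium of $u$. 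So the $x$-measure of the set $G$ of good ordered $k$-uniform profiles is at least $1/2$.

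The crux, and the step I expect to be the main obstacle, is converting this lower bound on $x(G)$ into a lower bound on the \emph{number} of elements of $G$, i.e. on its cardinality under the uniform counting measure on all $m^{kn}$ profiles. For a single ordered $k$-tuple $(a_{i(1)},\dots,a_{i(k)})$ drawn from population $i$, the $x$-probability is $\prod_{j=1}^k x_i(a_{i(j)})\le (c/m)^k$, since each factor is at most $c/m$ by the small-probabilities assumption. Multiplying across all $n$ populations, every single profile in $G$ has $x$-probability at most $(c/m)^{kn}$, whereas its uniform probability is $m^{-kn}$. Therefore
\begin{equation*}
\frac{|G|}{m^{kn}} \;=\; \sum_{a\in G} m^{-kn} \;\ge\; \sum_{a\in G} \frac{x(\{a\})}{(c/m)^{kn}\, m^{kn}} \;=\; c^{-kn}\, x(G) \;\ge\; \frac{1}{2}\, c^{-kn}.
\end{equation*}
Hence $|G|\ge \tfrac{1}{2}\,m^{kn} c^{-kn}$, and the final task is the purely arithmetic one of checking that $c^{kn}=2(nm)^{\frac{8}{\varepsilon^2}(n-1)n\ln c}$ under the stated value of $k$ (using $c^{\ln(\cdot)}=(\cdot)^{\ln c}$), so that the bound takes the advertised form. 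The delicate point throughout is keeping the bookkeeping between ``ordered $k$-tuples as pure profiles of $v$'' and ``$k$-uniform strategies of $u$'' consistent, so that the measure $x$ on profiles of $v$ and the counting over $k$-uniform profiles of $u$ refer to the same $m^{kn}$ objects; once that identification is fixed, the measure-to-count comparison above is the whole content.
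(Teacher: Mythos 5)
Your proposal is correct and follows essentially the same route as the paper: pass to $v=v(u,k)$ with the small-probability equilibrium $x$ lifted to $x^v$, reuse the $E_{i,b}$ analysis to get $x^v(\cap_{i,b}E_{i,b})\ge 1/2$, bound each profile's probability by $(c/m)^{kn}$ to convert measure into a count of at least $m^{kn}/(2c^{kn})$ profiles, and finish with the exponent arithmetic. One trivial slip: the identity to verify is $c^{kn}=(nm)^{\frac{8}{\varepsilon^2}(n-1)n\ln c}$ (no factor $2$), since the $2$ in the denominator already comes from the $1/2$ measure bound.
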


\begin{corollary}\label{cor:2}
Fix $n$ and let $k=\frac{8}{\varepsilon^2}(n-1)(\log n+\log m)$. Then the oblivious algorithm that samples at random $k$-uniform strategies and checks whether it forms an $\varepsilon$-equilibrium finds such an $\varepsilon$-equilibrium in $c$-small probability games after $(nm)^{\frac{8}{\varepsilon^2}(n-1)n\ln c}$ samples in expectation, i.e., after polynomial time in $m$.
\end{corollary}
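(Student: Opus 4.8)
The plan is to run the argument of Theorem~\ref{theo:1} on the Lipschitz game $v=v(u,k)$, but starting from an equilibrium of $v$ that is \emph{inherited} from the small-probabilities equilibrium of $u$, and then to translate the resulting probabilistic estimate into a counting estimate. First I would fix a Nash equilibrium $x$ of $u$ that is a $c$-small probabilities profile (so $x_i(j)\le c/m$ for all $i$ and $j$), form $v=v(u,k)$ as in Section~\ref{sec:G->L} with $k=\frac{8}{\varepsilon^2}(n-1)(\log n+\log m)$, and let $\tilde x$ be the profile of $v$ in which every split player $i(j)$ plays the original mixed strategy $x_i$.

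The first substantive step is to verify that $\tilde x$ is itself a mixed Nash equilibrium of $v$. Fix a player $i_0(j_0)$ and a pure action $b$. When the opponents play $\tilde x$, the aggregate $\frac1k\sum_{j}a_{i(j)}$ of each population $i\neq i_0$ is an average of $k$ i.i.d.\ draws from $x_i$, and these aggregates are independent across populations; since $u_{i_0}$ is multilinear, the expected payoff of $i_0(j_0)$ for playing $b$ equals $u_{i_0}(b,(x_i)_{i\neq i_0})=u_{i_0}(b,x_{-i_0})$. As $x$ is a Nash equilibrium of $u$, every $b\in\mathrm{support}(x_{i_0})$ maximizes this quantity, so $x_{i_0}$ is a best reply for $i_0(j_0)$ and $\tilde x$ is a Nash equilibrium of $v$.

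Now I would repeat the concentration argument of Theorem~\ref{theo:1} with $\tilde x$ in place of the abstract equilibrium: defining the sets $E_{i,b}$ and applying Proposition~\ref{prop:lip} (with degree $(n-1)k$, Lipschitz constant $1/k$, and $\delta=\varepsilon/2$) gives $\tilde x(\cap_{i,b}E_{i,b})\ge 1/2$, and every profile in this intersection is a pure Nash $\varepsilon$-equilibrium of $v$. By Remark~\ref{rem:lip}(P2) each such pure profile is exactly one of the $m^{kn}$ ordered $k$-uniform profiles of $u$ and is an $\varepsilon$-equilibrium of $u$; calling this set $S$, we have $\tilde x(S)\ge 1/2$.

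The final step, where the small-probabilities hypothesis is used, converts this mass bound into a cardinality bound: for every ordered profile $a$ of $v$ we have $\tilde x(a)=\prod_{i,j}x_i(a_{i(j)})\le (c/m)^{nk}$, hence $\tfrac12\le \tilde x(S)=\sum_{a\in S}\tilde x(a)\le |S|\,(c/m)^{nk}$, giving $|S|\ge m^{nk}/(2c^{nk})$. Since $nk=\frac{8}{\varepsilon^2}(n-1)n(\log n+\log m)$ and $\log n+\log m=\ln(nm)$, one has $c^{nk}=(nm)^{\frac{8}{\varepsilon^2}(n-1)n\ln c}$, which is exactly the claimed bound. I expect the main obstacle to be getting the two translations exactly right rather than any single computation: first that $\tilde x$ is a genuine equilibrium of $v$ (so the Theorem~\ref{theo:1} chain of inequalities applies), and second the passage from $\tilde x(S)\ge 1/2$ to the count $|S|$, which is the only place the bound $x_i(j)\le c/m$, and hence the small-probabilities assumption, enters.
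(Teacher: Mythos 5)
Your proposal is correct and follows essentially the same route as the paper: it reproduces the paper's proof of Theorem~\ref{theo:2} (lift the $c$-small-probabilities equilibrium $x$ of $u$ to the equilibrium $x^v$ of $v=v(u,k)$, run the Theorem~\ref{theo:1} concentration argument to get mass at least $1/2$ on $\cap_{i,b}E_{i,b}$, and use the per-atom bound $x^v(a)\le (c/m)^{nk}$ to count at least $m^{nk}/(2c^{nk})$ ordered $k$-uniform $\varepsilon$-equilibria), while additionally filling in the verification that $x^v$ is a genuine equilibrium of $v$, which the paper only asserts. The only step left implicit is the routine final observation that a per-sample success probability of at least $\bigl(2(nm)^{\frac{8}{\varepsilon^2}(n-1)n\ln c}\bigr)^{-1}$ yields the claimed expected number of samples, up to the factor $2$ that the paper's own statement also drops.
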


\begin{proof}[Proof of Theorem \ref{theo:2}]
Fix $k=\frac{8}{\varepsilon^2}(n-1)(\log n+\log m)$, and let $x$ be a $c$-small probability equilibrium of $u$. Consider the game $v=v(u,k)$ that is defined in Section \ref{sec:G->L}. Note that the action profile where every player $i(j)$ plays the mixed action $x_i$ is a Nash equilibrium of the game $v$. Denote this equilibrium by $x^v$.

Following the same analysis that was done in the proof of Theorem \ref{theo:1} we define the sets $E_{i,b}$ and we know that $x^v(\cap_{i,b} E_{i,b})\geq 1/2$. Two different pure action profiles in $\cap_{i,b} E_{i,b}\cap support(x^v)$ correspond to two different $k$-uniform Nash $\varepsilon$-equilibria in $u$. Let us show that there are many different action profiles in $\cap_{i,b} E_{i,b}\cap support(x^v)$.

Note that $x^v(a)\leq (c/m)^{nk}$ because $x$ is a $c$-small probabilities profile. On the other hand, $x^v(\cap_{i,b} E_{i,b})\geq 1/2$. Therefore, there must be at least $m^{nk}/2c^{nk}$ different profiles in $\cap_{i,b} E_{i,b}\cap support(x^v)$, which yield that there are at least $m^{nk}/2c^{nk}$ different $k$-uniform Nash $\varepsilon$-equilibria in $u$.

It only remains to evaluate the expression $c^{nk}$:
\begin{equation*}
c^{nk}=\left( c^{\ln n+\ln m}\right)^{\frac{8}{\varepsilon^2}(n-1)n}= \left( n^{\ln c} m^{\ln c} \right)^{\frac{8}{\varepsilon^2}(n-1)n}=(nm)^{\frac{17}{\varepsilon^2}(n-1)n\ln c}.
\end{equation*}

\end{proof}

\section{Discussion}\label{sec:disc}

This note contains a new approach to the problem of an approximate small support Nash equilibrium. Instead of considering the game itself, we can consider a population game where every player is replaced by a population of players and analyze the existence of an approximate pure Nash equilibrium in the population game. I believe that this approach might be useful for analyzing other interesting questions. For example, the question of characterizing the class of two-player games where an approximate Nash equilibrium with constant support exists might have the following interpretation: which two-population games with constant population size has a pure Nash equilibrium? Clearly, characterization of the above class is an important question because for those games there exists a polynomial-time exhaustive search algorithm for computing an approximate Nash equilibrium.

This paper provides an upper bound of $O(n \log m)$ on the size of the support of an approximate Nash equilibrium. It is known that the bound $\log m$ is tight even in two-player games (see Althofer \cite{althofer}); i.e., there exists a two-player game where no Nash approximate equilibrium with a support smaller than $c\log m$ exists. The question whether the linear dependence on $n$ is also tight remains an open question.

\textbf{Open problem:} Does there exist an $n$-player $n$-action game where in every Nash approximate equilibrium at least one of the players plays a mixed action with support of size $f(n)$?

By Althofer \cite{althofer} the answer to this question for $f(n)=c\log n$ is positive. What about $f(n)=c n^\alpha$ for $\alpha< 1$? What about $f(n)=cn$?

\end{document}